\newtheorem{theorem}{Theorem}
\begin{document}
\title{Learning Based Hybrid Beamforming  for Millimeter Wave Multi-User  MIMO Systems}

\author{
	Shaocheng~Huang, Yu~Ye, \IEEEmembership{Student Member, IEEE}  and  Ming~Xiao, \IEEEmembership{Senior Member, IEEE}
	\thanks{S.~Huang, Y.~Ye and M.~Xiao are with the Division of Information Science and Engineering, KTH Royal Institute of Technology, Stockholm, Sweden (e-mail: \{shahua, yu9, mingx\}@kth.se).}
}

\maketitle
 
\begin{abstract}
 
 Hybrid beamforming (HBF) design is a crucial stage in millimeter wave (mmWave) multi-user multi-input multi-output (MU-MIMO) systems. However, conventional HBF methods are still with high complexity and strongly rely on the quality of channel state information. We propose an extreme learning machine (ELM) framework to jointly optimize transmitting and receiving beamformers. Specifically, to provide accurate labels for training, we first propose an factional-programming and majorization-minimization based HBF method (FP-MM-HBF). Then, an ELM based HBF (ELM-HBF) framework is proposed to increase the robustness of beamformers. Both FP-MM-HBF and ELM-HBF can provide higher system sum-rate compared with existing methods. Moreover, ELM-HBF cannot only provide robust HBF performance, but also consume very short computation time.

\end{abstract}

\begin{IEEEkeywords}
    Beamforming, millimeter wave, MIMO,  machine learning, fractional programming (FP).  
\end{IEEEkeywords}
\IEEEpeerreviewmaketitle

\section{Introduction}

\IEEEPARstart{R}{cently}, hybrid  beamforming (HBF) design for millimeter wave (mmWave)   multi-user multi-input  multi-output (MU-MIMO) systems has been receiving  increasing research attention due to the advantages of using less fully digital (FD) beamformers and providing high beamforming gain to overcome the severe pathloss at affordable hardware cost \cite{Ming2017xiao,huang2020learning,nguyen2017hybrid,ni2015hybrid}.  
Generally, the main challenges  of the HBF optimization problem for mmWave MU-MIMO systems  are the non-convex constraints of  analog beamformers and the inter-user interference. 
Some optimization based methods have been proposed to optimize the hybrid beamformers by using following FD methods: block diagonalization zero-forcing (BD-ZF) \cite{spencer2004zero,ni2015hybrid}, minimum mean square error (MMSE) and  weighted MMSE (WMMSE)\cite{nguyen2017hybrid}. 
For example, to design hybrid beamformers, a WMMSE based orthognal matching pursuit (OMP) method is proposed   in \cite{nguyen2017hybrid} and a BD-ZF based exhaustive search method is proposed in \cite{elbir2019hybrid}. 
However, above approaches  either require perfect channel state information (CSI), or  high computational capability.

Recent development in   machine learning (ML) provides a new way for addressing problems in physical layer communications (e.g., direction-of-arrival estimation\cite{huang2018deep}, HBF design  \cite{huang2019deep,elbir2019hybrid} and channel estimation\cite{ye2017power}). ML based techniques have several advantages such as low complexity when solving non-convex problems and the ability  to extrapolate new features from noisy and limited training data \cite{elbir2019cnn}.  
In \cite{elbir2019hybrid},   a convolutional neural network (CNN) framework is first proposed to optimize   hybrid beamformers  for mmWave MU-MIMO systems, in which the network takes the imperfect channel matrix as the  input and produces the analog and digital  beamformers  as outputs. However, this work is only feasible for single stream transmission and the  exhaustive search HBF  algorithm is with extremely  high computational complexity.
 Furthermore, methods in \cite{huang2019deep,elbir2019hybrid}  using multiple large dimensional layers constructions  may consume tremendous computation  time in training phase, which is impractical with the hardware constraint (e.g., limited computational capability and memory resources) of mobile terminals.  


From above observations,  we propose an   extreme learning machine (ELM)  framework with easy implementation to jointly optimize transmitting  and  receiving beamformers. The main contributions are summarized as follows: 

 \begin{itemize}
 \item \textbf{New HBF optimization  algorithms}: We decouple the HBF optimization problem to two sub-problems.  Unlike the work in \cite{nguyen2017hybrid} that minimizes MMSE, we first propose an factional programming (FP) based  FD beamforming algorithm that directly  maximizes the system sum-rate. With FD beamformers, we then propose a low-complexity majorization-minimization (MM)  based algorithm to jointly optimize  hybrid beamformers. Finally, the convergence  and computational complexity of  proposed algorithms are analyzed. We show that above proposed methods can achieve higher sum-rate than conventional FD beamforming and  HBF methods.
 
  \item \textbf{Robust and low-complexity ML based HBF design}: We  propose an  easily   implemented  ML based HBF framework (i.e.,  ELM-HBF) to jointly estimate the precoders and combiners. Different from conventional HBF methods, of which the performance strongly relies on the quality of CSI, our learning based approach can achieve more robust performance since ELM  are effective at handling the imperfections and  corruptions in the input channel information. We show that, for $64$ transmit antennas, ELM-HBF can achieve  50   times faster prediction time than WMMSE-OMP-HBF\cite{nguyen2017hybrid}, and  1200  times faster training time than CNN-HBF\cite{elbir2019hybrid}.
 
\end{itemize} 

\emph{Notations}:  Bold lowercase and uppercase letters denote vectors and matrices, respectively.  $\text{Tr}( {\bf A})$, $ | {\bf A}  |$, $ \| {\bf A}  \|_\text{F}$, ${\bf A}^*$, ${\bf A}^T$ and ${\bf A}^H$ denote trace, determinant, Frobenius norm, conjugate, transpose and conjugate transpose of matrix ${\bf A}$, respectively. $\otimes$ presents the Kronecker product. $ \mathbb{H}_{+}$ denotes the set of Hermitian semi-positive matrices.  $\arg ({\bf a})$ denotes the  argument/phase of vector ${\bf a}$.

 \section{System model}  
We consider an mmWave downlink MU-MIMO system, in which a base station (BS),  equipped with $N_\text{t}$ antennas and $N_\text{RFT}  $ radio frequency (RF)  chains, is communicating with $K$ independent users (UEs). Each UE is equipped with  $N_\text{r}$ antennas and $N_\text{RFR}  $ RF chains to receive $N_\text{s}$ data streams simultaneously. To guarantee
the effectiveness of the communication carried by the
limited number of RF chains, the number of the transmitted
streams is constrained by $KN_\text{s} \le N_\text{RFT} \le N_\text{t}$ for the BS, and  $N_\text{s} \le N_\text{RFR} \le N_\text{r}$ for each UE.
The  received signal for the $k$-th UE is given by 
\begin{equation}\label{re_signal}
{\bf y}_k = {\bf W}_{\text{BB},k}^H {\bf W}_{\text{RF},k}^H ( {\bf H}_k \sum\nolimits_{k = 1}^K   {\bf F}_\text{RF} {\bf F}_{\text{BB},k}  {\bf s}_k + {\bf n}_k  ), 
\end{equation}
where $k \in  \{1,2,...,K\}$,  ${\bf s}_k \in \mathbb{C}^{N_\text{s}}$ is the transmitted symbol vector for the $k$-th UE such that $\mathbb{E}({\bf s}_k^H{\bf s}_k)=\frac{P}{K N_\text{s}}{\bf I}_{N_\text{s}} $, $P$ is total transmit power, and ${\bf n}_k \sim  \mathcal{CN}(0,\sigma_k^2 {\bf I}_{N_\text{r}} )$ is the additive white Gaussian noise (AWGN)  at the $k$-th UE.  ${\bf F}_\text{BB,k} \in \mathbb{C}^{N_\text{RFT} \times N_\text{s}}$  and ${\bf W}_\text{BB,k} \in \mathbb{C}^{N_\text{RFR} \times N_\text{s}}$ denote the  baseband digital precoder and combiner, respectively.  
 ${\bf F}_\text{RF}  \in \mathbb{C}^{ N_\text{t} \times N_\text{RFT} }$  and ${\bf W}_{\text{RF},k}  \in \mathbb{C}^{ N_\text{r} \times N_\text{RFR} }$ are the  analog precoder and combiner for the $k$-th UE, respectively. Both  ${\bf F}_\text{RF}$  and ${\bf W}_{\text{RF},k}$ are 
  implemented  using analog phase shifters with constant modulus, i.e., $|[{\bf F}_\text{RF}]_{n,m}|=1 $ and $|[{\bf W}_{\text{RF},k}]_{n,m}|=1 $\cite{Ming2017xiao}. To meet the total transmit power constraint at the BS,  precoding matrices  ${\bf F}_\text{RF}$ and ${\bf F}_{\text{BB},k}$ are constrained by $\sum_{k=1}^{K} \| {\bf F}_\text{RF}{\bf F}_{\text{BB},k}  \|_\text{F}^2 =K N_\text{s}$. The mmWave MIMO channel between  the BS and the $k$-th UE, denoted as  ${\bf H}_k $, can be characterized by  the Saleh-Valenzuela model\cite{nguyen2017hybrid}. From the above, the   system  sum-rate  when transmitted symbols follow a Gaussian distribution is given by   
  \begin{equation}\label{sumrate}
  \mathcal{R} =\sum\nolimits_{k = 1}^K \log  |  {\bf I}_{N_\text{s}}+ {\bf F}_k^H  {\bf H}_k^H {\bf W}_k    {\bf R}_k^{-1}   {\bf W}_k^H {\bf H}_k {\bf F}_k   |,
  \end{equation}
  where ${\bf W}_k = {\bf W}_{\text{RF},k} {\bf W}_{\text{BB},k}$, ${\bf F}_k = {\bf F}_\text{RF} {\bf F}_{\text{BB},k}$, and ${\bf R}_k ={\bf W}_k^H {\bf H}_k  ( \sum_{n  \ne k  } {\bf F}_n {\bf F}_n^H  )  {\bf H}_k^H {\bf W}_k  + \rho_n {\bf W}_k^H   {\bf W}_k  $ is the covariance matrix of total inter-user interference-plus-noise at $k$-th UE and $\rho_n=\sigma_n^2 K N_\text{s}/P $. 

 \section{HBF design with FP and MM method }   
In what follows, we  maximize the achievable system   sum-rate by jointly optimizing hybrid beamformers, i.e., $\mathcal{B}_k =\{ {\bf F}_{\text{RF}}, {\bf F}_{\text{BB},k},{\bf W}_{\text{RF},k},{\bf W}_{\text{BB},k}\}$, $\forall k$.  
  The optimization problem can be stated as 
\begin{equation}\label{Problem_main}
\begin{split}
(\text{P}1):~& \mathop {\max }\limits_{\mathcal{B}_k, \forall k } ~~  \mathcal{R} \\
&~~ \text{s.t.}  ~~~   \sum\nolimits_{k = 1}^K  \| {\bf F}_\text{RF}{\bf F}_{\text{BB},k} \|_\text{F}^2 =K N_\text{s},  \\
   &~~~~~~~~~{\bf F}_\text{RF}\in \mathcal{F}_\text{RF},  {\bf W}_{\text{RF},k}\in \mathcal{W}_{\text{RF},k},\forall k,\\
\end{split}
\end{equation}  
where $ \mathcal{F}_\text{RF}$ and  $\mathcal{W}_{\text{RF},k}$  denote the feasible sets of analog beamformers which obey the constant modulus constraints for ${\bf F}_\text{RF}$ and ${\bf W}_{\text{RF},k}$. 

Obviously, the sum-rate maximization problem (P1) is non-convex and NP-hard with respect to $\mathcal{B}_k,\forall k$ due to  the coupled variables in the matrix ratio term in \eqref{sumrate} and the constant modulus constraints of  analog beamformers. To make this problem tractable, we first transform problem (P1) to an easily implemented problem according to the FP theory in  \cite{shen2019optimization}. Then, problem (P1) can be  rewritten  as  
\begin{equation}\label{Problem_main2}
\begin{split}
(\text{P}2):~ \mathop {\max }\limits_{\mathcal{B}_k, {\bf V}_k, {\bf U}_k, \forall k }   ~~&\overline{\mathcal{R}} \\
~~ \text{s.t.}  \quad~~~~ &\sum\nolimits_{k = 1}^K  \| {\bf F}_\text{RF}{\bf F}_{\text{BB},k} \|_\text{F}^2 =K N_\text{s},   \\
  &{\bf V}_k \in \mathbb{H}_{+}^{N_\text{s}\times N_\text{s}}, {\bf U}_k \in \mathbb{C}^{N_\text{s}\times N_\text{s}},  \\
 &{\bf F}_\text{RF}\in \mathcal{F}_\text{RF},  {\bf W}_{\text{RF},k}\in \mathcal{W}_{\text{RF},k}, \forall k,\\
\end{split}
\end{equation} 
where $ \overline{\mathcal{R}} = \sum_{k=1}^{K} ( \log | {\bf \Gamma}_k  | )+\text{Tr}({\bf V}_k) + 2\text{Tr}({\bf \Gamma}_k  {\bf F}_k^H {\bf H}_k^H {\bf W}_k {\bf U}_k   )- \text{Tr}(  {\bf \Gamma}_k {\bf U}_k^H \overline {\bf R}_k {\bf U}_k  )$
with  $\overline {\bf R}_k ={\bf W}_k^H {\bf H}_k ( \sum_{n=1}^K {\bf F}_n {\bf F}_n^H )  {\bf H}_k^H {\bf W}_k  + \rho_n {\bf W}_k^H   {\bf W}_k  $ and ${\bf \Gamma}_k ={\bf I}_{N_\text{s}} +  {\bf V}_k $. 
Due to the power constraint and the  non-convex constraints of analog beamformers, it is still hard to solve problem (P2) directly. Thus, we propose a  two-step  approach to solve problem (P2). In the first step, we mainly focus on maximizing $\overline{\mathcal{R}}$ by jointly optimizing the FD beamformers (i.e., $\mathcal{D}=\{{\bf F}_k, {\bf W}_k, \forall k \}$). Then, problem (P2) is reformulated as 
 \begin{equation}\label{Problem_main3}
 \begin{split}
 (\text{P}3 ):~ \mathop {\max }\limits_{\mathcal{D}, {\bf V}_k, {\bf U}_k, \forall k} ~~  &\overline{\mathcal{R}} \\
 ~~ \text{s.t.}  ~~~~~~& \sum\nolimits_{k = 1}^K  \| {\bf F}_k \|_\text{F}^2 =K N_\text{s},   \\
   &{\bf V}_k \in \mathbb{H}_{+}^{N_\text{s}\times N_\text{s}}, {\bf U}_k \in \mathbb{C}_{+}^{N_\text{s}\times N_\text{s}}. \\
 \end{split}
 \end{equation}
Note that problem (P3) is  bi-convex which can be effectively solved with  alternating optimization (AO) methods. According to AO methods, the solutions of problem (P3) can be obtained iteratively, where in iteration $i+1$, the variables are updated as follows 
\begin{subequations}\label{Digital}
	\begin{align}	
	{\bf U}_k^{(i+1)}:=& \arg \mathop{\max}\limits_{{\bf U}_k}   \overline{\mathcal{R}} \big({\bf F}_k^{(i)}, {\bf W}_k^{(i)}, {\bf V}_k^{(i)},{\bf U}_k  \big), \forall k; \label{Digital1} \\
	{\bf V}_k^{(i+1)}:=& \arg \mathop{\max}\limits_{{\bf V}_k}   \overline{\mathcal{R}} \big({\bf F}_k^{(i)}, {\bf W}_k^{(i)}, {\bf V}_k,{\bf U}_k^{(i+1)}  \big), \forall k;\label{Digital2}\\		
	{\bf W}_k^{(i+1)}:=& \arg \mathop{\max}\limits_{{\bf W}_k}   \overline{\mathcal{R}} \big({\bf F}_k^{(i)}, {\bf W}_k, {\bf V}_k^{(i+1)},{\bf U}_k^{(i+1)}  \big), \forall k;\label{Digital3}\\		
	{\bf F}_k^{(i+1)}:=& \arg \mathop{\max}\limits_{{\bf F}_k}   \overline{\mathcal{R}} \big({\bf F}_k, {\bf W}_k^{(i+1)}, {\bf V}_k^{(i+1)},{\bf U}_k^{(i+1)}  \big), \forall k.\label{Digital4}
	\end{align}	
\end{subequations}  
According to some basic differentiation rules for complex-value matrices, closed-form solutions of   problems \eqref{Digital1}-\eqref{Digital4} are correspondingly derived as 
\begin{align}
 {\bf U}_k^{(i+1)}:=&( \overline {\bf R}_k^{(i)} )^{-1} {\bf W}_k^{H(i)} {\bf H}_k {\bf F}_k^{(i)}, \forall k; \label{Digital_solution1} \\   
   {\bf V}_k^{(i+1)}:=&{\bf F}_k^{H(i)} {\bf H}_k^H {\bf W}_k^{(i)} (  {\bf R}_k^{(i)} )^{-1} {\bf W}_k^{H(i)} {\bf H}_k {\bf F}_k^{(i)}, \forall k; \label{Digital_solution2}\\
   {\bf W}_k^{(i+1)}:=& ( {\bf H}_k^H ( \sum\nolimits_{n = 1}^K {\bf F}_n^{(i)} {\bf F}_n^{H(i)}   ) {\bf H}_k  + \rho_k {\bf I}_{N_\text{t}} )^{-1}  {\bf H}_k {\bf F}_k^{(i)}  \notag \\& \times {\bf \Gamma}_k^{H(i+1)} {\bf U}_k^{H(i+1)}( {\bf U}_k^{(i+1)}  {\bf \Gamma}_k^{(i+1)}  {\bf U}_k^{H(i+1)} )^{-1}, \forall k;\label{Digital_solution3}\\
   {\bf F}_k^{(i+1)}:=&( \sum\nolimits_{n = 1}^K  {\bf H}_n^H {\bf W}_n^{(i+1)}  {\bf U}_n^{(i)}  {\bf \Gamma}_n^{(i+1)}  {\bf U}_n^{H(i+1)}  {\bf W}_n^{H(i+1)}  {\bf H}_n  \notag\\ & +\delta^*{\bf I}_{N_\text{t}} )^{-1}  {\bf H}_k^H {\bf W}_k^{(i)}  {\bf U}_k^{(i+1)}  {\bf \Gamma}_k^{(i+1)} , \forall k,\label{Digital_solution4}
\end{align}
where  the optimal multiplier $\delta^*$ is introduced for the power constraint in \eqref{Problem_main3} and it can be easily obtained by bisection search.  
After obtaining the updated variables, we summarize the FP  based FD beamforming algorithm in Algorithm \ref{ALG1}.   
\begin{algorithm}[t]\label{ALG1}
	\caption{FP  based FD beamforming algorithm} 
	\begin{algorithmic}[1]
      	\STATE \textbf{Input}: ${\bf H}_k$, $\forall k$;
      	\STATE \textbf{Output}: ${\bf F}_k, {\bf W}_k$, $\forall k$;
		\STATE \textbf{Initialize}: ${\bf F}_k^{(0)}, {\bf W}_k^{(0)}$, ${\bf V}_k^{(0)}$, $\forall k$  and  $i=0$;  
		\REPEAT
		\STATE \text{Update} ${\bf U}_k^{(i+1)}$, $\forall k$,  using \eqref{Digital_solution1};
		\STATE \text{Update} ${\bf V}_k^{(i+1)}$, $\forall k$,  using \eqref{Digital_solution2};
		\STATE \text{Update} ${\bf W}_k^{(i+1)}$, $\forall k$,  using \eqref{Digital_solution3};
		\STATE \text{Update} ~${\bf F}_k^{(i+1)}$, $\forall k$,  using \eqref{Digital_solution4};												
		\STATE $i\leftarrow i+1$;
		\UNTIL{the stopping criteria is  met.}
	\end{algorithmic} 
\end{algorithm}

With the FD beamformers derived in Algorithm 1, we then turn to optimize  hybrid beamformers $\mathcal{B}_k$,$ \forall k$.  Extensive works show  that minimizing the Euclidean distance  between the FD beamformer and the hybrid beamformer is an effective surrogate for maximizing the sum-rate   of mmWave MU-MIMO systems\cite{nguyen2017hybrid}. In what follows, we first optimize the hybrid beamformers at the BS by  minimizing the Euclidean distance  between  FD beamformers (i.e., ${\bf F}_k$, $\forall k$)  and  hybrid beamformers (i.e., $  {\bf F}_{\text{RF}} {\bf F}_{\text{BB},k}$,  $\forall k$). Letting ${\bf F}_{\text{BB}}=[{\bf F}_{\text{BB},1}, ..., {\bf F}_{\text{BB},K}]$, the problem   is formulated as  
\begin{equation}\label{Problem_HFB_1}
\begin{split}
(\text{P}4 ):~\mathop {\min }\limits_{{\bf F}_{\text{RF}}, {\bf F}_{\text{BB}} }~ & \sum\nolimits_{k = 1}^K \| {\bf F}_k -{\bf F}_{\text{RF}} {\bf F}_{\text{BB},k}\|_\text{F}^2 \\
   \quad \text{s.t.}  ~~~  &{\bf F}_\text{RF}\in \mathcal{F}_\text{RF}, \\
   \quad  \quad &   \| {\bf F}_\text{RF}{\bf F}_{\text{BB}} \|_\text{F}^2 =K N_\text{s}. \\
\end{split}
\end{equation} 
 Though problem (P4) can be solve based on OMP and manifold optimization methods, the OMP method cannot achieve high system performance and the manifold optimization method is with extremely high complexity\cite{elbir2019hybrid}. Thus, to jointly design the hybrid beamformers at the BS, we will solve problem (P4) based on the AO framework, where the analog precoder ${\bf F}_\text{RF}$ is firstly optimized by by fixing ${\bf F}_{\text{BB},k}, \forall k$.   
Hence, (P4) can be rewritten as  
\begin{equation}\label{Problem_HFB_2}
\begin{split}
(\text{P}5 ):~\mathop {\min }\limits_{{\bf F}_{\text{RF}}}~ & f({\bf F}_{\text{RF}};{\bf F}_{\text{BB}}) \\
   \quad \text{s.t.}  ~  &{\bf F}_\text{RF}\in \mathcal{F}_\text{RF}, 
\end{split}
\end{equation} 
where 
\begin{equation}\label{function_f}
\begin{split}
f({\bf F}_{\text{RF}};{\bf F}_{\text{BB}}) =& \sum\nolimits_{k = 1}^K \| {\bf F}_k -{\bf F}_{\text{RF}} {\bf F}_{\text{BB},k} \|_\text{F}^2\\
 \mathop  = \limits^{(a)} &\sum\nolimits_{k = 1}^K \text{Tr}( {\bf F}_k {\bf F}_k^H )+{\bf f}_\text{RF}^H  {\bf Q}_k {\bf f}_\text{RF} -2\text{Re}({\bf f}_\text{RF}^H  {\bf e}_k),
\end{split}
\end{equation} 
 in which (a) follows from the identity $\text{Tr}({\bf A}{\bf B}{\bf C}{\bf D})=\text{vec}({\bf A}^T)^T ( {\bf D}^T\otimes{\bf B}  )\text{vec}({\bf C})$,  ${\bf Q}_k=({\bf F}_{\text{BB},k} {\bf F}_{\text{BB},k}^H)^T \otimes {\bf I}_{N_\text{t}}$, ${\bf f}_\text{RF} = \text{vec}({\bf F}_\text{RF})$, ${\bf E}_k={\bf F}_k {\bf F}_{\text{BB},k}^H$, and ${\bf e}_k = \text{vec}({\bf E}_k)$. 
 To effectively  solve  non-convex problem (P5), we use an MM method. The basic idea is to transform  original problem (P5) into a sequence of majorized subproblems that can be solved with closed-form minimizers. At first, according to lemma 2 in \cite{wu2017transmit},   we can find a valid majorizer of $f({\bf F}_{\text{RF}};{\bf F}_{\text{BB}})$ at point ${\bf F}_\text{RF}^{(i)}\in \mathcal{F}_\text{RF}$ as
 \begin{equation}\label{Majorizer}
 \begin{split}  
    &f({\bf F}_{\text{RF}}; {\bf F}_\text{RF}^{(i)},{\bf F}_{\text{BB}})
    = \sum\nolimits_{k = 1}^K   \text{Re}( {\bf f}_\text{RF}^H ( ( {\bf Q}_k -\lambda_k{\bf I}){\bf f}_\text{RF}^{(i)} -  {\bf e}_k  )) +C, 
 \end{split}
 \end{equation} 
where $\lambda_k$ denotes the maximum eigenvalue of  ${\bf Q}_k$, and the constant term $C=\sum\nolimits_{k = 1}^K \text{Tr}( {\bf F}_k {\bf F}_k^H )+\lambda_k{\bf f}_\text{RF}^H  {\bf f}_\text{RF}+{\bf f}_\text{RF}^{H(i)}( \lambda_k{\bf I} -{\bf Q}_k  ){\bf f}_\text{RF}^{(i)}$. Then, according to the MM method and  utilizing the majorizer in \eqref{Majorizer}, the solution of problem (P4) can be obtained by  iteratively solving the following problem 
  \begin{equation}\label{Problem_HFB_3}
  \begin{split}
   (\text{P}6):~  \mathop {\min }\limits_{{\bf F}_\text{RF}} ~ & f({\bf F}_{\text{RF}}; {\bf F}_\text{RF}^{(i)},{\bf F}_{\text{BB}})
  \\
     ~~ \text{s.t.} ~~ &  {\bf F}_\text{RF}\in \mathcal{F}_\text{RF}. 
  \end{split}
  \end{equation} 
The closed-form solution of problem (P6) is given by 
\begin{equation}\label{frf}
{\bf f}_\text{RF}^{(i+1)}=-\exp (j\arg ( \frac{1}{K}\sum\nolimits_{k = 1}^K  ( {\bf Q}_k -\lambda_k{\bf I} ){\bf f}_\text{RF}^{(i)} -  {\bf e}_k ) ).
\end{equation}

Then, we turn to design digital beamformers (i.e., ${\bf F}_{\text{BB},k}, \forall k$) at the BS with fixed  ${\bf F}_\text{RF}$. By fixing ${\bf F}_\text{RF}$, the solution of problem (P4) without considering the power constraint in \eqref{Problem_HFB_1} is given by 
 \begin{equation}\label{fbb}
{\bf F}_{\text{BB},k}= {\bf F}_\text{RF}^{-1} {\bf F}_k, \forall k.  
 \end{equation} 
To satisfy the power constraint in problem (P4),  we can normalize ${\bf F}_\text{BB}$ by a factor of $\frac{\sqrt{K N_\text{s}}}{   \| {\bf F}_\text{RF} {\bf F}_\text{BB}   \|_\text{F}}$\cite{yu2016alternating}.  The effectiveness  of the normalization step is refereed to \cite{yu2016alternating,huang2020learning}. 
With above closed-form solutions in \eqref{frf} and \eqref{fbb}, we summarize the MM based HBF algorithm  in Algorithm 2. 
  \begin{algorithm}[t]\label{ALG2}
 	\caption{MM based HBF  algorithm} 
 	\begin{algorithmic}[1]
 	    \STATE \textbf{Input}: ${\bf F}_k, \forall k$;
       	\STATE \textbf{Output}: ${\bf F}_\text{RF}$, ${\bf F}_\text{BB}$; 
 		\STATE \textbf{Initialize}: ${\bf F}_\text{RF}^{(0)} $ and outer iteration $i_\text{o}=0$;   
 		\REPEAT
 		\STATE  Fix ${\bf F}_\text{RF}^{(i_\text{o})}$,  compute ${\bf F}_{\text{BB},k}, \forall k$, according   to \eqref{fbb};
 		\STATE  Use MM method to compute  ${\bf F}_\text{RF}^{(i_\text{o}+1)}$:
 			    	\STATE \textbf{Initialize}: ${\bf F}_\text{RF}^{(0)}={\bf F}_\text{RF}^{(i_\text{o})}$ and inner iteration $i_\text{i}=0$;   
 					\REPEAT 
 					\STATE   Compute ${\bf F}_\text{RF}^{(i_\text{i}+1)}$ according to \eqref{frf};
 					\STATE   $i_\text{i} \leftarrow i_\text{i}+1$;
 					\UNTIL{the stopping criteria is  met.}
 					\STATE \textbf{Update}: ${\bf F}_\text{RF}^{(i_\text{o}+1)}={\bf F}_\text{RF}^{(i_\text{i})}$;    										
 		\STATE $i_\text{o} \leftarrow i_\text{o}+1$;
 		\UNTIL{the stopping criteria is  met.}
 		\STATE Compute ${\bf F}_\text{BB}= \frac{\sqrt{K N_\text{s}}}{ \| {\bf F}_\text{RF} {\bf F}_\text{BB} \|_\text{F}  }  {\bf F}_\text{BB}$. 
 	\end{algorithmic} 
 \end{algorithm} 
The hybrid beamformers at  UEs can be designed following a similar approach to that at the BS, and details are omitted here due to space limitation.  We then summarize the  convergence and   main complexity   of Algorithm 1 and Algorithm 2 in the following theorems. 

\begin{theorem}
The convergence of Algorithm 1 is guaranteed. The main complexity of Algorithm 1 is  $\mathcal{O}(2I_\text{fp}KN_{\text{t}}^3)$, where $I_\text{fp}$ is the number of iterations.
\end{theorem}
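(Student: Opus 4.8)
The plan is to establish the two assertions — convergence of Algorithm~1 and its per-iteration cost — in turn.

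\emph{Convergence.} I would use the standard monotonicity-plus-boundedness argument for exact block-coordinate ascent. Problem (P3) is bi-convex, and in \eqref{Digital_solution1}--\eqref{Digital_solution4} each of the four blocks ${\bf U}_k$, ${\bf V}_k$, ${\bf W}_k$, ${\bf F}_k$ is updated by a \emph{global} maximizer of $\overline{\mathcal{R}}$ over that block (the closed forms follow from setting the relevant complex-matrix gradient to zero together with block-wise concavity of $\overline{\mathcal{R}}$). Hence every step satisfies $\overline{\mathcal{R}}^{(i+1)} \ge \overline{\mathcal{R}}^{(i)}$, so $\{\overline{\mathcal{R}}^{(i)}\}_i$ is monotonically non-decreasing. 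For the upper bound I would invoke the FP equivalence of \cite{shen2019optimization}: partially maximizing $\overline{\mathcal{R}}$ over ${\bf V}_k,{\bf U}_k$ returns exactly the sum-rate $\mathcal{R}$ in \eqref{sumrate}, so $\overline{\mathcal{R}}^{(i)} \le \mathcal{R}(\mathcal{D}^{(i)})$; and $\mathcal{R}$ is bounded above on the feasible set since ${\bf R}_k \succeq \rho_k {\bf W}_k^H {\bf W}_k$ gives ${\bf W}_k {\bf R}_k^{-1} {\bf W}_k^H \preceq \rho_k^{-1} {\bf I}$, whence each $\log|\cdot|$ term is at most $\rho_k^{-1}\text{Tr}({\bf F}_k^H {\bf H}_k^H {\bf H}_k {\bf F}_k)$, which the power constraint $\sum_k \| {\bf F}_k \|_\text{F}^2 = K N_\text{s}$ caps in terms of $\| {\bf H}_k \|_\text{F}$. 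A bounded monotone sequence converges, giving the first claim; one may add that every limit point of the iterates is a stationary point of (P3) by the usual exact-BCD argument.

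\emph{Complexity.} I would count the dominant operations in one pass of the repeat-loop. The ${\bf W}_k$ update \eqref{Digital_solution3} and the ${\bf F}_k$ update \eqref{Digital_solution4} each require, for every user, assembling and inverting an $N_\text{t}$-dimensional matrix (built from $\sum_n {\bf F}_n {\bf F}_n^H$, respectively $\sum_n {\bf H}_n^H {\bf W}_n {\bf U}_n {\bf \Gamma}_n {\bf U}_n^H {\bf W}_n^H {\bf H}_n$), costing $\mathcal{O}(N_\text{t}^3)$ apiece; over the $K$ users this is $\mathcal{O}(K N_\text{t}^3)$ for each of the two steps. The ${\bf U}_k,{\bf V}_k$ updates, the remaining matrix products, and the bisection for $\delta^*$ involve only smaller factors (dimensions $N_\text{s}$, $N_\text{r}$) and a logarithmic number of cheap re-evaluations, hence contribute only lower-order terms in the practical regime $N_\text{s}\le N_\text{r}\le N_\text{t}$ and are absorbed into the constant. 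Adding the two $\mathcal{O}(K N_\text{t}^3)$ contributions and multiplying by the number of iterations $I_\text{fp}$ yields $\mathcal{O}(2 I_\text{fp} K N_\text{t}^3)$.

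\emph{Main obstacle.} The only non-mechanical point is the upper-boundedness of $\overline{\mathcal{R}}$: because ${\bf W}_k$ and the auxiliaries ${\bf U}_k,{\bf V}_k$ are unconstrained, one has to rule out the objective escaping to $+\infty$ along these directions. This is taken care of by the scale-invariance of ${\bf W}_k {\bf R}_k^{-1} {\bf W}_k^H$ in ${\bf W}_k$ (so $\mathcal{R}$ depends on ${\bf W}_k$ only through an orthogonal projector, bounded by ${\bf I}$) together with the collapse of $\overline{\mathcal{R}}$ to $\mathcal{R}$ after the partial maximizations over ${\bf V}_k,{\bf U}_k$. The complexity half is routine bookkeeping once one fixes which matrices are inverted and at which dimension.
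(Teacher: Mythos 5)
Your proposal follows the same route as the paper: convergence via the monotone-nondecreasing sequence $\{\overline{\mathcal{R}}^{(i)}\}$ produced by the exact block updates \eqref{Digital} together with upper-boundedness under the power constraint, and the complexity count from the two $N_\text{t}$-dimensional inversions in \eqref{Digital_solution3}--\eqref{Digital_solution4} repeated over $K$ users and $I_\text{fp}$ iterations. You simply supply more detail than the paper does (the FP collapse of $\overline{\mathcal{R}}$ to $\mathcal{R}$ and the explicit bound on the sum-rate), so the argument is correct and essentially identical in approach.
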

\begin{proof}
According to \eqref{Digital}, $\{\overline{\mathcal{R}} \big({\bf F}_k^{(i)}, {\bf W}_k^{(i)}, {\bf V}_k^{(i)},{\bf U}_k^{(i)}  \big)\}$ is a monotonically non-decreasing sequence and it thus converges, since $\{\overline{\mathcal{R}} \big({\bf F}_k^{(i)}, {\bf W}_k^{(i)}, {\bf V}_k^{(i)},{\bf U}_k^{(i)}  \big)\}$ is upper bounded with power constraints.
For Algorithm 1,  the main complexity at each iteration comes from the inversion operations  in \eqref{Digital_solution1}-\eqref{Digital_solution4}, which is $\mathcal{O}(2KN_{\text{t}}^3)$.
\end{proof}

\begin{theorem}\label{theorem_alg2}
The convergence of Algorithm 2 is guaranteed. The main complexity of Algorithm 2  is $\mathcal{O}(I_\text{out} (I_\text{in} K (N_\text{t}N_\text{RFT})^3+ N_\text{t} N_\text{RFT}^2 ) ) $, where $ I_\text{out}$ and $I_\text{in}$ are the numbers of outer  and inner iterations, respectively
\end{theorem}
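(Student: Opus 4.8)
The plan is to follow the template of the proof of Theorem~1: first exhibit a monotone, bounded objective sequence to obtain convergence, then tally the dominant linear-algebra cost per iteration.

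\emph{Convergence.} Algorithm~2 is an alternating-optimization scheme acting on the nonnegative objective $f(\mathbf{F}_\text{RF};\mathbf{F}_\text{BB})=\sum_{k=1}^K\|\mathbf{F}_k-\mathbf{F}_\text{RF}\mathbf{F}_{\text{BB},k}\|_\text{F}^2\ge 0$ of problem~(P4), so it suffices to show that neither the $\mathbf{F}_\text{BB}$-update nor the inner MM loop ever increases $f$. For the digital step, with $\mathbf{F}_\text{RF}$ fixed the map $\mathbf{F}_{\text{BB}}\mapsto f$ is a convex least-squares function whose unconstrained global minimizer is exactly~\eqref{fbb}, so $f$ cannot increase. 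For the inner loop I would invoke the defining MM properties of the surrogate in~\eqref{Majorizer}: by Lemma~2 of \cite{wu2017transmit} it globally majorizes $f(\cdot;\mathbf{F}_\text{BB})$ over $\mathcal{F}_\text{RF}$ and is tight at the expansion point $\mathbf{F}_\text{RF}^{(i)}$, and the update~\eqref{frf} is the exact minimizer of this surrogate over $\mathcal{F}_\text{RF}$, since minimizing a linear form $\mathrm{Re}(\mathbf{f}_\text{RF}^H\mathbf{b})$ under unit-modulus entry constraints is solved entrywise by $\mathbf{f}_\text{RF}=-\exp(j\arg(\mathbf{b}))$. Chaining tightness, the majorization inequality, and the minimization gives $f(\mathbf{F}_\text{RF}^{(i+1)};\mathbf{F}_\text{BB})\le f(\mathbf{F}_\text{RF}^{(i)};\mathbf{F}_\text{BB})$, so the inner sequence is non-increasing; because the inner loop is initialized at the current outer iterate $\mathbf{F}_\text{RF}^{(i_\text{o})}$, the value of $f$ after the inner loop is no larger than before it. Hence $\{f(\mathbf{F}_\text{RF}^{(i_\text{o})},\mathbf{F}_\text{BB}^{(i_\text{o})})\}$ is monotonically non-increasing and bounded below by $0$, thus convergent; the final one-shot normalization only rescales $\mathbf{F}_\text{BB}$ to meet the power constraint and does not affect this argument.

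\emph{Complexity.} I would cost one outer iteration and multiply by $I_\text{out}$. The digital update~\eqref{fbb} requires the (pseudo-)inverse of $\mathbf{F}_\text{RF}$ and $K$ products $\mathbf{F}_\text{RF}^{-1}\mathbf{F}_k$, which is $\mathcal{O}(N_\text{t}N_\text{RFT}^2)$. Each of the $I_\text{in}$ inner iterations evaluates~\eqref{frf}, whose dominant cost is the matrix operations involving $\mathbf{Q}_k=(\mathbf{F}_{\text{BB},k}\mathbf{F}_{\text{BB},k}^H)^T\otimes\mathbf{I}_{N_\text{t}}\in\mathbb{C}^{N_\text{t}N_\text{RFT}\times N_\text{t}N_\text{RFT}}$ — forming the bound $\lambda_k$ and applying $(\mathbf{Q}_k-\lambda_k\mathbf{I})$ to $\mathbf{f}_\text{RF}^{(i)}$ — for each of the $K$ users, i.e. $\mathcal{O}(K(N_\text{t}N_\text{RFT})^3)$ per inner iteration. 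Summing, one outer iteration costs $\mathcal{O}(I_\text{in}K(N_\text{t}N_\text{RFT})^3+N_\text{t}N_\text{RFT}^2)$, and the total is $\mathcal{O}(I_\text{out}(I_\text{in}K(N_\text{t}N_\text{RFT})^3+N_\text{t}N_\text{RFT}^2))$, as claimed.

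\emph{Main obstacle.} The only substantive point is the inner-loop monotonicity: one must confirm that~\eqref{Majorizer} is a genuine majorizer of $f$ on the \emph{non-convex} set $\mathcal{F}_\text{RF}$ — this is where Lemma~2 of \cite{wu2017transmit} and the constant-modulus structure are used — and that~\eqref{frf} is its exact constrained minimizer; the remainder is bookkeeping. A secondary subtlety is that the inner loop terminates at a stopping criterion rather than at convergence, so the outer monotonicity argument must rely on the warm start $\mathbf{F}_\text{RF}^{(0)}=\mathbf{F}_\text{RF}^{(i_\text{o})}$ to ensure $f$ still does not increase across the inner block.
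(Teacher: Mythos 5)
Your proposal is correct and follows essentially the same route as the paper: the paper's own proof simply defers the convergence argument to a cited reference (where the standard AO/MM monotone-descent reasoning you spell out is the intended content) and attributes the per-iteration cost to the maximum-eigenvalue computation for ${\bf Q}_k$ and the pseudo-inversion of ${\bf F}_\text{RF}$, exactly matching your tally of $\mathcal{O}(I_\text{in} K (N_\text{t}N_\text{RFT})^3 + N_\text{t}N_\text{RFT}^2)$ per outer iteration. Your version is in fact more complete than the paper's, since you explicitly verify the surrogate's majorization/tightness properties, the exact unit-modulus minimizer in \eqref{frf}, and the warm-start point that keeps the objective non-increasing across the inner block.
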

\begin{proof}

The proof of  convergence is similar to that in \cite{huang2020learning}, and  omitted here for space limitation.
The main  complexity at each iteration  of Algorithm 2 comes from   finding the maximum eigenvalue of ${\bf Q}_k$ and the pseudo inversion of ${\bf F}_\text{RF}$. That is $\mathcal{O}(I_\text{in} K (N_\text{t}N_\text{RFT})^3+ N_\text{t} N_\text{RFT}^2  ) $. 
\end{proof}
Above proposed HBF algorithm (denoted as FP-MM-HBF) is iterative algorithm and still suffers from high computational complexity as the number of antennas increases. 
Furthermore, since the  proposed HBF algorithm   and existing optimization based algorithms are linear mapping from the channel matrices and the hybrid beamformers, they require  a  real-time computation, and are not robust to noisy channel input data. Thus, a learning based approach to address these problems is proposed in the following section.

 \section{HBF design with ELM }    
   \begin{figure}[t] 
  	\vskip 0.2in
  	\begin{center}
  		\centerline{\includegraphics[width=85mm]{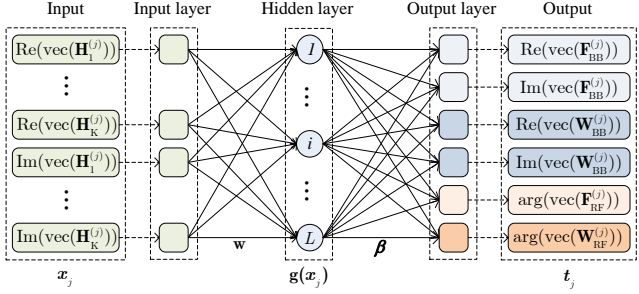}}
  		\caption{ELM network for HBF design.}
  		\label{fig_ELM}
  	\end{center}
  	\vskip -0.2in
  \end{figure}
 In what follows, we present our ELM framework for joint hybrid precoders and combiners design, shown in Fig.~\ref{fig_ELM}.  There is only one hidden layer in  ELM, the weights of input nodes and bias for the hidden nodes are generated randomly. 
    We assume that the training dataset is $\mathcal{D}= \{(\bm{x}_j,\bm{t}_j)|j=1, \ldots,N\} $, where $\bm{x}_j$ and $\bm{t}_j$ are sample and target for the $j$-th training data. 
    Specifically,  the $j$-th training data is defined as
     $\bm{x}_j=[\text{Re}(\text{vec}(\overline {\bf{H}}^{(j)}_{1})),..., \text{Re}(\text{vec}(\overline{\bf{H}}^{(j)}_{K})), \text{Im}(\text{vec}(\overline {\bf{H}}^{(j)}_{1})),...,$ $ \text{Im}(\text{vec}(\overline{\bf{H}}^{(j)}_{K}))]  \in\mathbb{R}^{N_\text{I}}$   where $\overline{\bf{H}}^{(j)}_k\sim \mathcal{CN}({\bf{H}}_k,\Gamma_k)$  and  ${N_\text{I}}= 2K N_\text{r}N_\text{t}$.
      And    $\Gamma_k $ denotes the variance  of  added synthetic noise,  with its $(m,n)$-th entry as $[\Gamma_k]_{m,n} = \frac{|[ {\bf{H}}^{(j)}_k ]_{m,n}|^2 }{10^{\text{SNR}_{\text{Train}}/20 }}$, where $\text{SNR}_{\text{Train}}$ is the SNR for the training data\cite{elbir2019hybrid}.   
    The target of $j$-th data is $\bm{t}_j=[\text{Re}(\text{vec}({\bf{F}}^{(j)}_{\text{BB}})),  \text{Im}(\text{vec}({\bf{F}}^{(j)}_{\text{BB}})) , \text{Re}(\text{vec}({\bf{W}}^{(j)}_{\text{BB}})), $ $\text{Im}(\text{vec}({\bf{W}}^{(j)}_{\text{BB}})),\arg(\text{vec}({\bf{F}}^{(j)}_{\text{RF}})),\arg(\text{vec}({\bf{W}}^{(j)}_{\text{RF}}))]  \in\mathbb{R}^{N_\text{o}}$, where 
    $N_\text{o}=N_\text{t}N_\text{RFT}+K(N_\text{r}N_\text{RFR}+ N_\text{s}(N_\text{RFR})+N_\text{RFT}))$,  ${\bf W}_{\text{BB}}=[{\bf W}_{\text{BB},1}, ..., {\bf W}_{\text{BB},K}]$, and ${\bf W}_{\text{RF}}=[{\bf W}_{\text{RF},1}, ..., {\bf W}_{\text{RF},K}]$. The beamformers in   $\bm{t}_j$ are obtained by Algorithms 1 and 2. 
       According to  \cite{ElMsurvey,yy2019}, the output of ELM related to sample $\bm{x}_j$ can be mathematically modeled as
  \begin{equation}\label{eq18}
\sum\nolimits_{l = 1}^L  \beta_l g_l(\bm{x}_j)=\sum\nolimits_{l = 1}^L \beta_l g(\bm{w}_l^T\bm{x}_j+b_l)={\bf g}(\bm{x}_j)\bm{\beta},
 \end{equation} 
 where $\bm{w}_l=[w_{l,1},\ldots,w_{l, N_\text{I}}]^T$ is the weight vector connecting the $l$-th hidden node and the input nodes, $\bm{\beta}=[\beta_1,\ldots, \beta_L]^T \in \mathbb{R}^{L\times N_\text{o}}$, and $\beta_l=[\beta_{l,1},\ldots,\beta_{l,N_\text{o}}]^T$ is the weight vector connecting the $l$-th hidden node and the output nodes, and $b_l$ is the bias of the $l$-th hidden node. 
 
 Since there is only one hidden layer in ELM, with randomized weights $\{\bm{w}_i\}$ and biases $\{b_i\}$, the goal is to tune the output weight $\bm{\beta}$ with training data $\mathcal{D}$ through minimizing the ridge regression problem 
  \begin{equation} 
  (\text{P}7):~ \bm{\beta}^*=\arg\min_{\bm{\beta}}~\frac{\lambda}{2}\| {\bf G}\bm{\beta}-{\bf T}\|^2 + \frac{1}{2} \| \bm{\beta} \|^2,
  \end{equation}
  where $ {\bf T}=[\bm{t}_{1},\ldots, \bm{t}_N]^T_{N\times N_\text{o}}$, $\lambda$ is the trade-off parameter between the training error and the regularization and 
   \begin{equation}
   {\bf G}=\left[\begin{matrix}
   {\bf g}(\bm{x}_1)\\\vdots\\ {\bf g}(\bm{x}_N)
   \end{matrix} \right]=\left[\begin{matrix}
   g_1(\bm{x}_1) & \cdots &g_L(\bm{x}_1) \\
   \vdots & \cdots & \vdots \\
   g_1(\bm{x}_N) & \cdots &g_L(\bm{x}_N)
   \end{matrix}\right]_{L\times N}.
   \end{equation}
   According to  \cite{huang2020learning}, the closed-form solution for (P7) is 
  \begin{equation}\label{beta1}
  \bm{\beta}^*={\bf G}^T(\frac{\bf{I}}{\lambda}+{\bf G}{\bf G}^T )^{-1}{\bf T} ,
  \end{equation}
 
  From above, it can be concluded that ELM is with very low complexity since there is only one layer's parameters to be trained and the weight of output layer (i.e., $\bm{\beta}$ ) is given in closed-form. 
 
\section{Numerical simulations}
In this section, we  numerically evaluate  the performance of 
  our  proposed methods, and compare them with  four  state-of-the-art methods:   BD-ZF-FD\cite{spencer2004zero}, BD-ZF-HBF\cite{ni2015hybrid}, WMMSE-OMP-HBF \cite{nguyen2017hybrid}  and CNN-HBF \cite{elbir2019hybrid}. Uniform planar array \cite{nguyen2017hybrid} is used, and the number of cluster and array for mmWave channels are set to  $5$ and $10$, respectively.
  We select $K=3$, $N_\text{s}=2$, $N_\text{r}=16$  and   $L=4000$. 
 The ELM is fed with  100 channel realizations and for each channel realization, 100 noisy channels are obtained   by adding synthetic noise with different powers of $\text{SNR}_{\text{Train}}\in \{15, 20, 25\}$ dB. 
 
\begin{figure*}[ht]
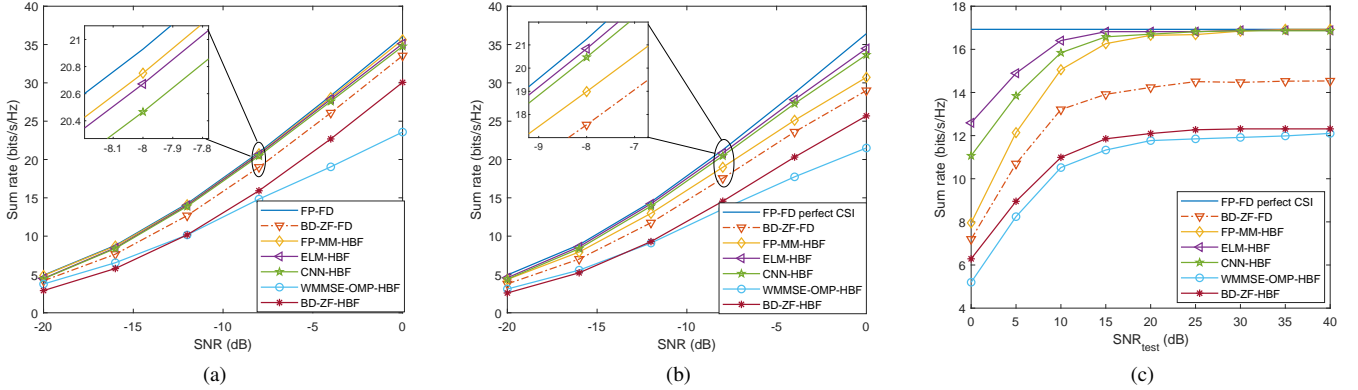

	\centering
	\subfloat[ ]{\includegraphics[width =.34\textwidth]{R_CSI.pdf}\label{fig_a}}
	\subfloat[ ]{\includegraphics[width=.34\textwidth]{R_ICSI.pdf}\label{fig_b}}
	\subfloat[ ]{\includegraphics[width=.34\textwidth]{Rate_rubost.pdf}\label{fig_c}}
	\caption{Achievable sum-rate  vs SNR and $\text{SNR}_{\text{Test}}$, with  $N_t=36$,   $N_\text{RFT}=9$ and  $N_\text{RFR}=3$: (a) perfect CSI; (b) imperfect CSI (i.e., $\text{SNR}_{\text{Test}}=10$ dB); (c) sum-rate  vs  $\text{SNR}_{\text{Test}}$.}
	\label{fig_loss}
\end{figure*}

 Fig.~\ref{fig_a} and Fig.~\ref{fig_b} present the achievable sum-rate of various beamforming  methods versus SNR with perfect and imperfect CSI, respectively. With perfect CSI, it shows that all the proposed methods outperform others in Fig.~\ref{fig_a}. Both FP-MM-HBF and ELM-HBF can approach the sum-rate performance of proposed fully digital beamforming scheme, i.e., FP-FD. 
 With imperfect CSI, Fig.~\ref{fig_b} shows that learning based methods (i.e., ELM-HBF and CNN-HBF) achieve higher sum-rate than other optimization based methods. 
 We observe that ELM-HBF provides better performance than CNN-HBF and    is very close to the   performance of FP-FD. The reason is that the weight matrices of ELM are given in closed-form and  are much easier  to be optimized those that of CNN.

Table \ref{complexity_SE} shows the computation time for different HBF methods.  
The computation time of a learning based method is characterized by offline training time and online prediction time. 
A performance comparison among   two common  activation functions for ELM, i.e.,  sigmoid  function and   parametric rectified linear unit (PReLU) function\cite{huang2020learning}, is provided. 
  $1000$ noisy channel samples for $10$ channel realizations are fed into the learning machines, and $100$ noisy channel samples are used for testing. We select $\text{SNR}=-8$ dB, $N_\text{r}=16$, $\text{SNR}_{\text{Train}}=\text{SNR}_{\text{Test}}=10$ dB. We can see that 
  ELM-HBF with different activation functions  always  achieve higher sum-rate than other methods. Among all methods,   
 ELM-HBF with PReLU consumes the shortest prediction time. For instance, as $N_t=64$, it can achieve 50 and 10 times faster prediction time than WMMSE-OMP-HBF and BD-ZF-HBF, respectively.  Moreover, CNN-HBF  takes much longer training time than ELM-HBF. For instance, 
  ELM-HBF with  PReLU can achieve 1200 times faster training time than  CNN-HBF.

\section{Conclusions}
An ELM framework  is proposed to jointly design the hybrid precoders and combiners  for mmWave MU-MIMO systems. We show that proposed methods can provide higher sum-rate than existing methods. Moreover, the  proposed ELM-HBF can support more robust performance than CNN-HBF and other optimization based methods. Finally, for $N_t=64$, ELM-HBF can achieve     50   times faster prediction time than WMMSE-OMP-HBF, and  1200  times faster training time than CNN-HBF. Thus, ELM-HBF is with much lower complexity and might be more practical for implementation.

\begin{table*}[t]
\caption{Sum-rate (bits/s/Hz), Training and Prediction Time Comparison}\label{complexity_SE}
\scalebox{0.67}{
\begin{tabular}{|c|c|c|c|c|c|c|c|c|c|c|c|c|c|c|c|}
\hline
\multirow{3}{*}{Nt} & \multicolumn{2}{c|}{\multirow{2}{*}{FP-MM-HBF}}                             & \multicolumn{2}{c|}{\multirow{2}{*}{WMMSE-OMP-HBF}}                         & \multicolumn{2}{c|}{\multirow{2}{*}{BD-ZF-HBF}}                             & \multicolumn{3}{c|}{\multirow{2}{*}{CNN-HBF}}                                                                                             & \multicolumn{6}{c|}{ELM-HBF}                                                                                                                                                                                                                                                                        \\ \cline{11-16} 
                    & \multicolumn{2}{c|}{}                                                   & \multicolumn{2}{c|}{}                                                   & \multicolumn{2}{c|}{}                                                   & \multicolumn{3}{c|}{}                                                                                                                 & \multicolumn{3}{c|}{Sigmoid Node}                                                                                                                         & \multicolumn{3}{c|}{PReLU Node}                                                                                                                     \\ \cline{2-16} 
                    & \begin{tabular}[c]{@{}c@{}}Prediction\\ Time (s)\end{tabular} & Rate    & \begin{tabular}[c]{@{}c@{}}Prediction\\ Time (s)\end{tabular} & Rate    & \begin{tabular}[c]{@{}c@{}}Prediction\\ Time (s)\end{tabular} & Rate    & \begin{tabular}[c]{@{}c@{}}Prediction\\ Time (s)\end{tabular} & \begin{tabular}[c]{@{}c@{}}Training\\ Time (s)\end{tabular} & Rate    & \begin{tabular}[c]{@{}c@{}}Prediction\\ Time (s)\end{tabular} & \begin{tabular}[c]{@{}c@{}}Training\\ Time (s)\end{tabular} & Rate             & \begin{tabular}[c]{@{}c@{}}Prediction\\ Time (s)\end{tabular} & \begin{tabular}[c]{@{}c@{}}Training\\ Time (s)\end{tabular} & Rate               \\ \hline
16                  & 0.2546                                                        & 12.0939 & 0.1297                                                        & 9.0359  & 0.0410                                                        & 6.4354  & 0.0129                                                        & 1036.72                                                     & 12.5740 & 0.0480                                                        & 89.7284                                                     & 13.0325          & \textbf{0.0055}                                               & \textbf{3.1467}                                             & \textbf{13.3141} \\ \hline
36                  & 1.3227                                                        & 19.0945 & 0.3244                                                        & 13.2076 & 0.0501                                                        & 14.9696 & 0.0396                                                        & 3112.25                                                     & 20.2783 & 0.2330                                                        & 312.5433                                                    & \textbf{20.6413} & \textbf{0.0085}                                               & \textbf{3.6052}                                             & 20.3399          \\ \hline
64                  & 6.0488                                                        & 23.3880 & 0.7071                                                        & 14.9997 & 0.1362                                                        & 19.4132 & 0.0581                                                        & 5073.54                                                     & 24.3137 & 0.3041                                                        & 410.1147                                                    & 24.6988          & \textbf{0.0137}                                               & \textbf{4.3036}                                             & \textbf{24.8205} \\ \hline
\end{tabular} }
\end{table*}
 
\bibliography{ref}
\bibliographystyle{IEEEtran}

\end{document}